\newtheorem{lemma}{Lemma}
\acrodef{4G}{4th Generation}
\acrodef{5G}{5th Generation}
\acrodef{6G}{6th Generation}
\acrodef{BS}{base station}
\acrodef{RA}{receive antenna}
\acrodef{PA}{predictor antenna}
\acrodef{IID}{independent and identically distributed}
\acrodef{CDF}{cumulative distribution function}
\acrodef{PDF}{probability density function}
\acrodef{cu}{channel use}
\acrodef{ACK}{acknowledgement}
\acrodef{NACK}{negative acknowledgement}
\acrodef{SNR}{signal-to-noise ratio}
\acrodef{HARQ}{hybrid automatic repeat request}
\acrodef{CSIT}{channel state information at the transmitter side}
\acrodef{CSI}{channel state information}
\acrodef{INR}{incremental redundancy}
\acrodef{npcu}{nats-per-channel-use}
\acrodef{MIMO}{multiple input multiple output }
\acrodef{TDD}{time division duplex}
\acrodef{E2E}{end-to-end}
\acrodef{IAB}{integrated access and backhaul}
\acrodef{ICNIRP}{International Commission on non-ionizing radiation protection}
\acrodef{3GPP}{3rd Generation Partnership Project}
\acrodef{MR}{moving relay}
\acrodef{QoS}{quality of service}
\acrodef{VPL}{vehicle penetration loss}
\acrodef{LoS}{line-of-sight}
\acrodef{NLoS}{non-line-of-sight}
\acrodef{mmWave}{millimeter wave}
\acrodef{CoMP}{Coordinated Multi-Point}
\acrodef{MISO}{multiple input single output }
\acrodef{BF}{beamforming}
\acrodef{FBL}{finite block-length}
\acrodef{UE}{user equipment}
\acrodef{UL}{uplink}
\acrodef{DL}{downlink}
\acrodef{RIS}{reconfigurable intelligent surface}
\acrodef{IU}{intended user}
\acrodef{IUE}{intended user equipment}
\acrodef{NIU}{non-intended user}
\acrodef{NIUE}{non-intended user equipment}
\acrodef{EMF}{electromagnetic field}
\acrodef{EMFE}{electromagnetic field exposure}
\acrodef{DFT}{discrete Fourier transform}
\acrodef{MRT}{maximum-ratio transmission}
\acrodef{AO}{alternating optimization}
\acrodef{AoD}{angle-of-departure}
\acrodef{JBPO}{joint beamforming and transmit power optimization}
\begin{document}
\captionsetup{belowskip=0pt,aboveskip=0pt}

\title{Electromagnetic Field Exposure Avoidance thanks to Non-Intended User Equipment and RIS}

\author{\IEEEauthorblockN{Hao~Guo\IEEEauthorrefmark{1}, Dinh-Thuy Phan-Huy\IEEEauthorrefmark{2}, and
Tommy~Svensson\IEEEauthorrefmark{1}}
\IEEEauthorblockA{\IEEEauthorrefmark{1}Chalmers University of Technology, 41296 Gothenburg, Sweden, \{hao.guo, tommy.svensson\}@chalmers.se}
\IEEEauthorblockA{\IEEEauthorrefmark{2}Orange Innovation/Networks, 92326 Chatillon, France, dinhthuy.phanhuy@orange.com}
}

\maketitle
\begin{abstract}
On the one hand, there is a growing demand for high throughput which can be satisfied thanks to the deployment of new networks using massive multiple-input multiple-output (MIMO) and beamforming. On the other hand, in some countries or cities, there is a demand for arbitrarily low electromagnetic field exposure (EMFE) of people not concerned by the ongoing communication, which slows down the deployment of new networks. Recently, it has been proposed to take the opportunity, when designing the future 6th generation (6G), to offer, in addition to higher throughput, a new type of service: arbitrarily low EMFE. Recent works have shown that a reconfigurable intelligent surface (RIS), jointly optimized with the base station (BS) beamforming can improve the received throughput at the desired location whilst reducing EMFE everywhere. In this paper, we introduce a new concept of a non-intended user (NIU). An NIU is a user of the network who requests low EMFE when he/she is not downloading/uploading data. An NIU lets his/her device, called NIU equipment (NIUE), exchange some control signaling and pilots with the network, to help the network avoid exposing NIU to waves that are transporting data for another user of the network: the intended user (IU), whose device is called IU equipment (IUE). Specifically, we propose several new schemes  to maximize the IU throughput under an EMFE constraint at the NIU (in practice, an interference constraint at the NIUE). Several propagation scenarios are investigated. Analytical and numerical results show that proper power allocation and beam optimization can remarkably boost the EMFE-constrained system's performance with limited complexity and channel information.
\end{abstract}
\begin{IEEEkeywords}
Reconfigurable intelligent surface (RIS), massive multiple-input-multiple-output (M-MIMO), electromagnetic field exposure (EMFE), beamforming, millimeter wave (mmWave).
\end{IEEEkeywords}

\section{Introduction} 
The \ac{5G} and most likely coming \ac{6G} use massive \ac{MIMO} beamforming and \ac{mmWave} \cite{vook2014ims} to boost the spectral efficiency and the energy efficiency of networks. However, in some countries or cities, due to a local and stronger concern regarding human exposure to \ac{EMF}, the \ac{EMFE} limits set by the local regulation can reach a very low level, sometimes ten times lower \cite[Fig. 10]{luca2021health} than the levels recommended by \ac{ICNIRP} \cite{icnirp2020limit}. Such arbitrarily low levels slow down the deployment of new networks, as it has been observed as early as with \ac{4G} \cite{gsma2014arbitraty}. For these reasons, in \cite{emilio2021reconfigurable,george2022smart}, it has been proposed to take the opportunity when designing the  new generation of networks, the future \ac{6G}, to offer a new type of service: arbitrarily low \ac{EMFE} as a service. 

To do so, \cite{emilio2021reconfigurable,george2022smart} proposes to use \acp{RIS}, to shape the radio propagation environment \cite{renzo2019smart,basar2019wireless}. \ac{RIS} are new network nodes reflecting waves in the desired direction and can be seen as intelligent mirrors or passive relays \cite{renzo2020reconfigurable}. In \cite{luca2021health,hussam2022emf,zappone2022spl,dt2022spawc}, \ac{RIS}-aided schemes have been proposed to reduce the self-\ac{EMFE} of a customer to his/her own \ac{UE} transmissions in the \ac{UL} direction. In \cite{awarkeh2021spawc,nour2022eucnc,nour2022eucncusing,yu2022swirnet}, \ac{RIS}-aided \ac{DL} beamforming schemes have been proposed to confine the over-exposed area, due to the transmission of the \ac{BS} in the \ac{DL} direction, inside a predefined circle. To our best knowledge, up to now, no study has been performed about the full potential of an \ac{EMFE}-constrained \ac{RIS}-assisted system. 

To our best knowledge, no study has been performed so far on an \ac{EMFE}-constrained \ac{RIS}-assisted system where the exposed person helps the network control his/her \ac{EMFE}. In this paper, for the first time, we introduce the concept of \ac{NIU} who is a customer of a mobile network operator with no data needs at the considered moment. The device carried by a \ac{NIU} is called \ac{NIUE}. Without data transmission, the \ac{NIUE} agrees to let the network use some control signaling and pilot exchanges between itself and the network while limiting the \ac{EMFE} at the \ac{NIU}, by controlling the level of interference at the \ac{NIUE}. In this work, we study how the \ac{NIUE} and a \ac{RIS} could help the \ac{BS} maximize the throughput at the \ac{IUE} whilst meeting an \ac{EMFE} constraint at the \ac{NIU} (or equivalently meeting an interference constrain at the \ac{NIUE}).

In order to realize this purpose, we propose different methods to maximize the throughput of an \ac{IU} and at the same time meet an \ac{EMFE} requirement (i.e. a threshold) at an \ac{NIU} side (this \ac{EMFE} being caused by the \ac{DL} transmission between a \ac{BS} and an \ac{IU}), thanks to the help of the \ac{NIUE}, the \ac{UE}, and an \ac{RIS}. Our contributions are as follows:
\begin{itemize}
    \item We first formulate a \ac{JBPO} problem at the \ac{BS} side, with the objective to provide the best data rate at the \ac{IU} under the \ac{NIU} \ac{EMFE} constraint. 
    \item We then propose two schemes to solve the \ac{JBPO} problem with different performance and different assumptions on \ac{CSI} availability: \ac{AO}-, and \ac{DFT}-based optimization. They are later combined with five power allocation methods between the direct path and the \ac{RIS}-assisted path. Finally, we provide analytical insights on 1) performance upper bound with transmitting different codewords as well as 2) a simple power-filling rule with priority on the direct-link transmission.
    \item Two propagation scenarios are investigated: 1) the \ac{NIUE} is located on the direct propagation path between the \ac{BS} and the \ac{IU}, and 2) the \ac{NIUE} is located out of this path alongside the \ac{IU}. 
\end{itemize}

The remainder of the paper is organized as follows. Section II introduces the system model and the problem formulation of \ac{JBPO}. In Section II, the \ac{AO}- and \ac{DFT}-based optimization schemes are proposed and analyzed. Section IV presents simulation results to validate the performance of the proposed methods while Section V concludes the paper.

\section{System Model} 
Consider an \ac{RIS}-assisted \ac{MIMO} \ac{DL} system with one \ac{BS} deployed with $N_\text{T}$ transmit antennas, one \ac{RIS} with $N$ elements, and a pair of single-antenna \ac{IU} and \ac{NIU}. Here, \ac{IU} is the user who wants to receive service from the \ac{BS} while the \ac{NIU} does not want to connect to the \ac{BS}. More importantly, \ac{NIU} prefers to have limited \ac{EMFE}  given that it requires no communication service. We assume that the \ac{NIU} does not decode the message or absorb the energy from the \ac{BS} and acts as one probe in space to detect the signal strength. Different from recent studies, e.g., \cite{awarkeh2021spawc,zappone2022spl},  as shown in Figs. \ref{fig_1}-\ref{fig_2}, two cases are considered in this work:
\begin{itemize}
    \item \textbf{Case 1}: The \ac{NIU} moves between the \ac{BS} and the \ac{IU}. Depending on the location of the \ac{NIU} and the beam width from the \ac{BS}, at some points/areas, the \ac{NIU} and \ac{IU} could receive the same transmission beam from the \ac{BS}.
    \item \textbf{Case 2}: The \ac{NIU} moves alongside the \ac{IU} and it could receive a different beam/side beam from the \ac{BS}.
\end{itemize}

\begin{figure}
\centering
  \includegraphics[width=1.0\columnwidth]{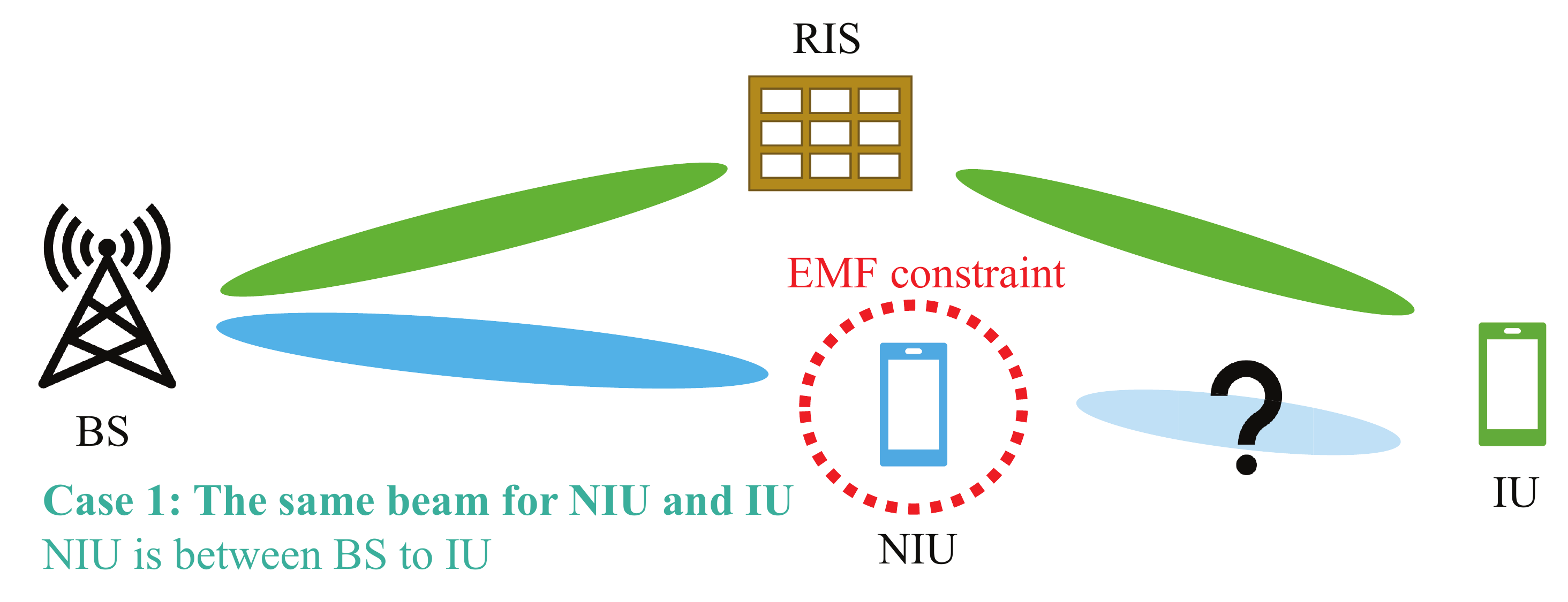}\\
\caption{\ac{EMFE}-constrained \ac{RIS} system, Case 1: The \ac{NIU} moves between the \ac{BS} and \ac{IU} and it receives the same beam as the \ac{IU}.}
\label{fig_1}
\end{figure}

In this way, the received message for the \ac{IU} is
\begin{align}
    y = \sqrt{P} \bm{h}\bm{w}x + n,
\end{align}
where $P$ is the transmit power at the \ac{BS} and $\bm{w}x$ is  the transmitted message with a unit-power precoder.  $n$ represents the additive Gaussian noise at the receiver side.  The channel $\bm{h}\in \mathcal{C}^{1\times N_\text{T}} $ between \ac{BS}-\ac{IU} includes both the direct link (\ac{BS}-\ac{IU}) as well as the \ac{RIS}-assisted link (\ac{BS}-\ac{RIS}-\ac{IU}), i.e.,
\begin{align}
    \bm{h} = \bm{h}_{\text{D}} + \bm{h}_{\text{RI}}\bm{\Theta}\bm{h}_{\text{BR}}.
\end{align}
Here, $\bm{h}_{\text{D}}\in\mathcal{C}^{1\times N_\text{T}}$ is the direct link, while $\bm{h}_{\text{BR}}\in\mathcal{C}^{N\times N_\text{T}}$ and $\bm{h}_{\text{RI}}\in\mathcal{C}^{1\times N}$ are the channel between \ac{BS}-\ac{RIS} and \ac{RIS}-\ac{IU}, respectively. We assume that $\bm{h}_{\text{D}}$ is independent of $\bm{h}_{\text{BS}}$ and $\bm{h}_{\text{RI}}$. Moreover,
\begin{align}
    \bm{\Theta}= \text{diag}(e^{j\phi_1}, ..., e^{j\phi_{N}})
\end{align}
is the reflection coefficient matrix of the \ac{RIS}. Assuming the channels of \ac{BS}-\ac{IU} and \ac{BS}-\ac{RIS}-\ac{IU} are spatially well separated/orthogonal such that the cross terms can be omitted, with \ac{BS}-active and \ac{RIS}-passive beamforming, the data rate at the \ac{IU} can be expressed as \footnote{Note that, by ignoring the cross terms we are actually presenting the lower bound of the system performance, whereas the exact performance with joint precoder design is left for further work.}
\begin{align}\label{eq_R_defi}
    R = B\log_2\left(  1 +  \frac{P_\text{D}\left|\bm{h}_{\text{D}}\bm{w}_{\text{D}}\right|^2 + P_\text{R}\left|\bm{h}_{\text{RI}}\bm{\Theta}\bm{h}_{\text{BR}}\bm{w}_{\text{R}}\right|^2 }{BN_0}      \right),
\end{align}
where $P_\text{D}$, $P_\text{R}$, $\bm{w}_{\text{D}}\in\mathcal{C}^{N_\text{T}\times 1}$, $\bm{w}_{\text{R}}\in\mathcal{C}^{N_\text{T}\times 1}$  are the power of the direct link, the power of the \ac{RIS} link, the precoder of the direct link, and the precoder of the \ac{RIS} link, respectively. Both $\bm{w}_{\text{D}}$ and $\bm{w}_{\text{R}}$ have unit power. Also, $B$ and $N_0$ represent the system bandwidth and the noise power spectral density, respectively.

\begin{figure}
\centering
  \includegraphics[width=1.0\columnwidth]{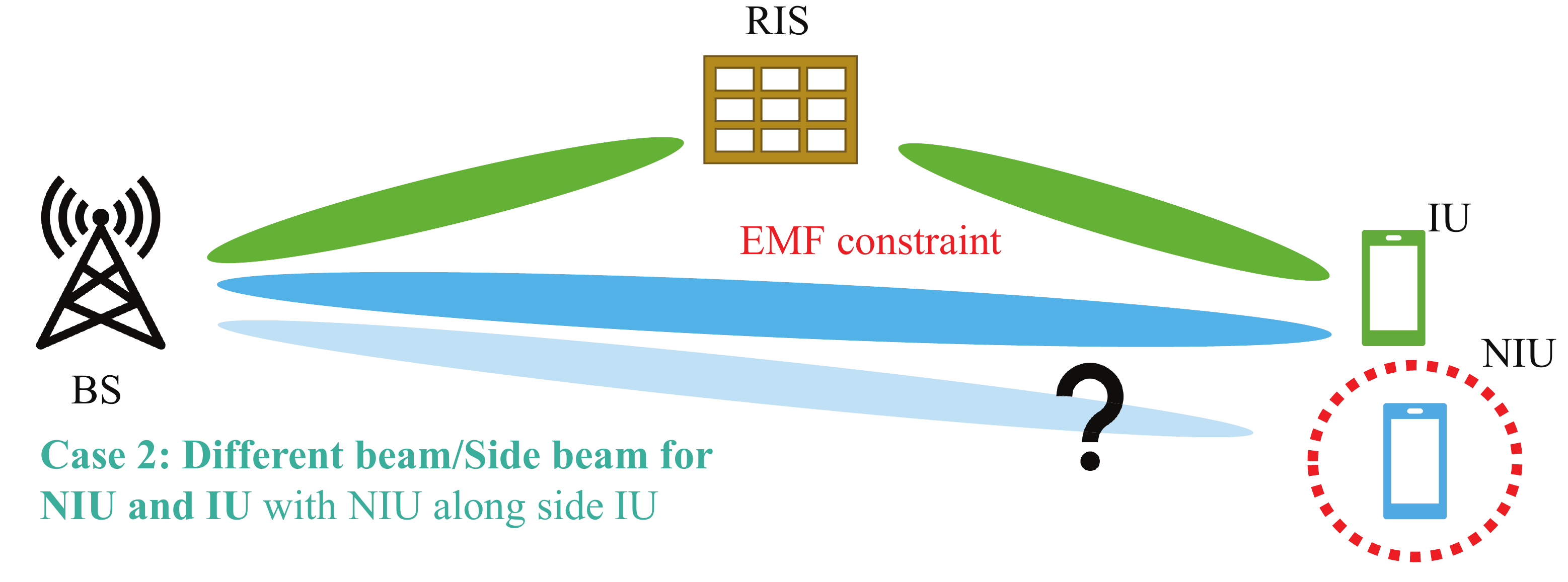}\\
\caption{\ac{EMFE}-constrained \ac{RIS} system, Case 2: The \ac{NIU} is alongside the \ac{IU} and it receives a different beam/side beam compared to the \ac{IU}. }
\label{fig_2}
\end{figure}

Assuming the \ac{NIU} occupies a unit area, the \ac{EMFE} constraint in terms of received power can be written as \cite{luca2022twc}
\begin{align}\label{eq_power_niu}
    P_{\text{N, Rx}} = \frac{4\pi}{\lambda}P_{\text{N, Tx}}\left|\bm{h}_\text{N}\bm{w}_\text{N}\right|^2,
\end{align}
with $P_{\text{N, Tx}}$ ($P_{\text{N, Rx}}$) being the transmit (receive) power of the \ac{BS}-\ac{NIU} link, and $\bm{h}_{\text{N}}$ and $\bm{w}_{\text{N}}$ are the channel and the precoder of the \ac{BS}-\ac{NIU} link, respectively. Both direct and reflected paths  can be considered using such a model. We assume that the channel of the \ac{BS}-\ac{NIU} link is known at the \ac{BS}, and the \ac{NIUE} is capable of sensing the received power and reporting back to the \ac{BS}.

\section{EMFE-aware Joint Beamforming and Transmit Power Optimization} 
The target of this work is to jointly maximize the rate at the \ac{IU} with proper beamforming design ($\bm{w}_{\text{D}}$, $\bm{w}_{\text{R}}$, $\bm{\Theta}$) and power allocation ($P_\text{D}$, $P_\text{R}$), in the presence of the \ac{NIU} with an \ac{EMFE} constraint. It can be mathematically stated as
\begin{equation}\label{eq_R}
\begin{aligned}
\max_{P_\text{D}, P_\text{R}, \bm{w}_\text{D}, \bm{w}_\text{R}, \bm{\Theta}} \quad & R\\
\textrm{s.t.} \quad & P_{\text{N, Rx}}\leq \bar{P}, \\
&0\leq P_\text{D}\leq P, 0\leq P_\text{R}\leq P,
   P_\text{D} + P_\text{R} = P,\\
  & \|\bm{w}_\text{D}\|^2 = 1, \|\bm{w}_\text{R}\|^2 = 1,\\
  & \phi_n \in [0,2\pi], n = 1, ..., N, \forall{n}.
\end{aligned}
\end{equation}
Here, $\bar{P}$ is a constant which represents the \ac{EMFE} constraint in terms of the received power at the \ac{NIU} and can be modeled by, e.g., (\ref{eq_power_niu}) as a function of $\bm{h}_{\text{N}}$. Since the beamforming optimization is independent of the transmit power, in the following, we first jointly optimize the beam of the \ac{BS} and the \ac{RIS} using two schemes with different \ac{CSI} requirements. Then, we propose different methods to allocate the transmit power at the \ac{BS} considering the \ac{EMFE} requirement of \ac{NIU}.

\subsection{Joint Beam Optimization for the BS and RIS}
\addtolength{\topmargin}{0.02in}
With the assumption of independent paths of the direct link and the \ac{RIS}-assisted link, the beamforming pattern of the direct link can be obtained by, e.g., \ac{MRT}, as $\bm{w}_\text{D} = \bm{h}_\text{D}^{H}/\left\|\bm{h}_\text{D}\right\|$, with $(\cdot)^{H}$ representing the Hermitian transpose.

For the \ac{RIS}-aided indirect link, we need to jointly optimize the active beamforming vector $ \bm{w}_\text{R}$ at the \ac{BS}, as well as the passive beamforming matrix $\bm{\Theta}$ at the \ac{RIS}. This problem has been widely studied in the literature under different setups/assumptions. For instance, with perfect knowledge of \ac{CSI}, i.e., $\bm{h}_{\text{BR}}$ and $\bm{h}_{\text{RI}}$, \ac{AO} has been shown to converge to the optimal solution of (\ref{eq_R}) \cite[Sec. III-B]{wu2019twc} \cite[Sec. III]{guo2020twc}\cite[Algorithm I]{zappone2021twc}. In this work, we use \ac{AO} in \ac{EMFE}-aware \ac{RIS} systems as an upper bound to compare to, as presented in Algorithm. \ref{alg_1}. Here, maximizing $R$ is equivalent to optimize $|\bm{h}_{\text{RI}}\bm{\Theta}\bm{h}_{\text{BR}}\bm{w}_{\text{R}}|$. Using \ac{AO}, for a fixed $\bm{w}_{\text{R}}$, the optimal $\phi_n$ resulting in $-\angle \left\{ \bm{h}_{\text{RI}}^{*}(n)\bm{h}_\text{w}(n), \forall{n} \right\}$ with $\bm{h}_\text{w} = \bm{h}_{\text{BR}}\bm{w}_\text{R}$. Then, for fixed $\Theta$, the optimal $\bm{w}_{\text{R}}$ can be obtained by calculating the dominant right eigenvector of $\bm{g} = \bm{h}_{\text{RI}}\bm{\Theta}\bm{h}_\text{BR}$. Once the optimal beams of the \ac{BS} and \ac{RIS} are settled, various power allocation methods can be applied to reach the maximum rate (\ref{eq_R_defi}), which are presented in Sec. \ref{sec_power}.

\begin{algorithm}[t!]
 \caption{Joint beam optimization using \ac{AO} in \ac{EMFE}-aware \ac{RIS} systems}
 \begin{algorithmic}
  \REQUIRE $\bm{h}_{\text{BR}}$, $\bm{h}_{\text{RI}}$, $\bm{h}_{\text{N}}$, and $\bar{P}$\\
 1. Initialize $ \bm{w}_\text{R}$ to some feasible values at the \ac{BS}.
\REPEAT
\STATE2. {Calculate $\bm{h}_\text{w} = \bm{h}_{\text{BR}}\bm{w}_\text{R}$};\\
\STATE3. {Set the \ac{RIS} phase $\phi_n = -\angle \left\{ \bm{h}_{\text{RI}}^{*}(n)\bm{h}_\text{w}(n), \forall{n} \right\}$};\\
\STATE4. {Compute $\bm{g} = \bm{h}_{\text{RI}}\bm{\Theta}\bm{h}_\text{BR}$};\\
\STATE5. {Set $\bm{w}_\text{R}$ as the right dominant eigenvector of $\bm{g}$};\\
\UNTIL{Convergence}\\
6. Compute power allocation $P_{\text{D}}$ and $P_{\text{R}}$ use methods in Sec. \ref{sec_power}, considering \ac{EMFE} constraint $\bar{P}$ under channel $\bm{h}_{\text{N}}$.
\RETURN Optimal rate $R$ for \ac{IU} (\ref{eq_R_defi}).
 \end{algorithmic}
 \label{alg_1}
 \end{algorithm}
 
In some scenarios, especially with large number of \ac{RIS} elements, optimizing $\phi_n$ and $\bm{w}_\text{R}$ with explicit \ac{CSI} may not be practical. Inspired by the precoding scheme designed in \cite[Algorithm 1]{huang2021globecom}, we propose to use a \ac{DFT} codebook-based beam optimization where the \ac{RIS} beam is selected from the pre-defined beam patterns while only the concatenated channel $\bm{h}_{\text{RI}}\bm{\Theta}\bm{h}_{\text{BR}}\bm{w}_{\text{R}}$ is needed to optimize $\bm{w}_{\text{R}}$, as presented in Algorithm \ref{alg_2}.

Complexity: The computational complexity of Algorithm. \ref{alg_1} is $\mathcal{O}\left(I_{\text{AO}}\left(N^2+2NN_{\text{T}}+N+N_{\text{T}}\right)\right)$, where $I_{\text{AO}}$ represents the number of iterations. For Algorithm. \ref{alg_2} the complexity is $\mathcal{O}\left(\left(N+1\right)\left(N^2+NN_{\text{T}}\right)\right)$.

\begin{algorithm}[t!]
 \caption{Joint beam optimization using \ac{DFT} codebook-based beamforming in \ac{EMFE}-aware \ac{RIS} systems}
 \begin{algorithmic}
 \REQUIRE Concatenate channel $\bm{h}_{\text{RI}}\bm{\Theta}\bm{h}_{\text{BR}}\bm{w}_{\text{R}}$ with selected $\bm{\Theta}$, pre-defined \ac{DFT} codebook $\bm{V}\in\mathcal{C}^{N\times N}$,  $\bm{h}_{\text{N}}$, and $\bar{P}$\\
 \STATE 1. {The \ac{BS} send training and control message to \ac{RIS}};\\
      \FOR{$i$ = 1:$N$}
        \STATE 2. The \ac{RIS} pick $i$-th beam $\bm{v}_i$ from the predefined codebook $\bm{V}$ and form the reflection matrix as $\Theta = \text{diag}\left(\bm{v}_i\right)$;\\
        \STATE 3. The \ac{IU} calculate the received power for the selected \ac{RIS} beam $\left\|\bm{h}_{\text{RI}}\text{diag}\left(\bm{v}_i\right)\bm{h}_{\text{BR}}\right\|^2$;\\
      \ENDFOR
\STATE 5. {The \ac{IU} feeds back the best beam index $i_\text{best}$ in terms of received power to the \ac{BS}};\\
\STATE 6. {The \ac{BS} obtains the precoder as  $\bm{w}_\text{DFT} = \frac{\bm{h}_\text{DFT}^{H}}{\left\|\bm{h}_\text{DFT}\right\|}$, where $\bm{h}_\text{DFT} = \bm{h}_{\text{RI}}\text{diag}\left(\bm{v}_{i_\text{best}}\right)\bm{h}_{\text{BR}}$ };\\
\STATE 7. {The \ac{RIS} generate
Compute power allocation $P_{\text{D}}$ and $P_{\text{R}}$ use methods in Sec. \ref{sec_power}, considering \ac{EMFE} constraint $\bar{P}$ under channel $\bm{h}_{\text{N}}$.}\\
\RETURN Optimal rate $R$ for \ac{IU} (\ref{eq_R_defi}).
 \end{algorithmic}
 \label{alg_2}
 \end{algorithm}

\subsection{Power Allocation Schemes for the Direct and RIS-assisted Links with EMFE Constraints}\label{sec_power}
Depending on the channel condition as well as the \ac{EMFE} constraint, the power division between the direct link $P_{\text{D}}$ and the \ac{RIS}-aided link $P_{\text{R}}$ may vary. In this subsection, various power allocation methods with different performance-complexity trade-offs are presented. Specifically, we consider the following:
\begin{itemize}
    \item \textbf{Method 1}: All power is allocated to the \ac{BS}-\ac{RIS}-\ac{IU} link, i.e., $P_{\text{R}} = P$.
    \item \textbf{Method 2}: All power is allocated to the direct link, i.e., $P_{\text{D}} = P$.
    \item \textbf{Method 3}: Based on the information from the \ac{NIU}, i.e., $P_{\text{N, Tx}}$ in (\ref{eq_power_niu}), the \ac{BS} fills the direct link with $P_{\text{D}} = P_{\text{N, Tx}}$ and allocate the remaining power to $P_{\text{R}} = P - P_{\text{D}}$.
    \item \textbf{Method 4}: As one upper bound of Method 3, the power allocation factor $\alpha$, i.e., 
    \begin{align}
        P_{\text{D}} = \alpha P, \alpha\leq\frac{P_{\text{N, Tx}}}{P},
    \end{align}
    can be adaptively and exhaustively optimized with the knowledge of the optimized beams from the corresponding steps in Algorithms \ref{alg_1}-\ref{alg_2}. 
    \item \textbf{Method 5}: As another upper bound of Method 3, we consider transmitting different codewords in the direct link and the \ac{RIS} link, i.e., the \ac{IU} rate can be upper bounded as
    \begin{align}\label{eq_R_inr}
    R_{\text{upper}} = & B\Bigg(\log_2\left(  1 +  \frac{P_\text{D}\left|\bm{h}_{\text{D}}\bm{w}_{\text{D}}\right|^2  }{BN_0}      \right) +\nonumber\\ & \log_2\left(  1 +  \frac{ P_\text{R}\left|\bm{h}_{\text{RI}}\bm{\Theta}\bm{h}_{\text{BR}}\bm{w}_{\text{R}}\right|^2 }{BN_0}      \right)\Bigg).
    \end{align}
    This can be achieved by, e.g., spatial multi-stream with joint detection at the \ac{IU}. In the following Lemma \ref{lemma1}, we show that $\alpha$ can be analytically determined when considering Method 5. 
\end{itemize}

\begin{lemma}\label{lemma1}
Define
\begin{align}
    c_1 = \frac{\mathbb{E}\left[\left|\bm{h}_{\text{RI}}\bm{\Theta}\bm{h}_{\text{BR}}\bm{w}_{\text{R}}\right|^2\right]}{BN_0},
\end{align}
and
\begin{align}
    c_2 = \frac{\mathbb{E}\left[\left|\bm{h}_{\text{D}}\bm{w}_{\text{D}}\right|^2\right]}{BN_0}
\end{align}
as the average received \ac{SNR} with optimized \ac{BS} and \ac{RIS} beamforming. The optimal power allocation can be calculated as
\begin{align}\label{eq_lemma1}
    \min\left\{\frac{P_{\text{N, Tx}}}{P}, \frac{Pc_2-Pc_1+P^2c_1c_2}{2\left(P^2c_1c_2\right)} \right\}.
\end{align}
\end{lemma}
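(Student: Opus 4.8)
The plan is to reduce the objective (\ref{eq_R_inr}) to a single-variable concave maximization in the power-split factor $\alpha$, solve the unconstrained problem in closed form, and then invoke the \ac{EMFE} constraint to obtain the projected solution. First I would substitute $P_\text{D} = \alpha P$ and $P_\text{R} = (1-\alpha)P$ (as forced by $P_\text{D}+P_\text{R}=P$) into $R_\text{upper}$, and replace the instantaneous gains $|\bm{h}_\text{D}\bm{w}_\text{D}|^2$ and $|\bm{h}_{\text{RI}}\bm{\Theta}\bm{h}_{\text{BR}}\bm{w}_\text{R}|^2$ by their averages, so that, using the definitions of $c_1$ and $c_2$, the averaged rate becomes $\bar R(\alpha) = B\bigl[\log_2(1+\alpha P c_2) + \log_2(1+(1-\alpha)P c_1)\bigr]$. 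The factor $B$ and the change from $\log_2$ to $\ln$ are monotone and do not move the optimizer, so it suffices to maximize $g(\alpha) = \ln(1+\alpha P c_2) + \ln(1+(1-\alpha)P c_1)$.

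Second, I would establish that $g$ is strictly concave by differentiating twice, obtaining $g''(\alpha) = -\frac{(Pc_2)^2}{(1+\alpha Pc_2)^2} - \frac{(Pc_1)^2}{(1+(1-\alpha)Pc_1)^2} < 0$, so that any interior stationary point is the unique global maximizer. Setting $g'(\alpha)=0$, i.e. $\frac{Pc_2}{1+\alpha Pc_2} = \frac{Pc_1}{1+(1-\alpha)Pc_1}$, and cross-multiplying collapses to the linear equation $c_2 - c_1 + Pc_1c_2 = 2Pc_1c_2\,\alpha$, whose solution $\alpha^\star = \frac{c_2-c_1+Pc_1c_2}{2Pc_1c_2} = \frac{Pc_2-Pc_1+P^2c_1c_2}{2P^2c_1c_2}$ is exactly the second argument of the $\min$ in (\ref{eq_lemma1}).

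Third, I would bring in the feasibility set. The constraints of Method 4/Method 5 restrict $\alpha$ to $[0,\,P_{\text{N,Tx}}/P]$, the upper bound encoding the \ac{EMFE} limit at the \ac{NIU}. Because $g$ is concave, increasing up to $\alpha^\star$ and then decreasing, its maximum over this interval is attained either at the interior point $\alpha^\star$, when it is feasible, or at the active upper boundary $P_{\text{N,Tx}}/P$ otherwise; in both cases the maximizer equals $\min\{P_{\text{N,Tx}}/P,\ \alpha^\star\}$, which is (\ref{eq_lemma1}). This monotone-then-clip argument is precisely the one-dimensional KKT condition for a concave program on an interval.

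The main obstacle I anticipate is not the calculus but the bookkeeping around feasibility: one must check that $\alpha^\star \ge 0$ (equivalently $c_2 + Pc_1c_2 \ge c_1$, which holds whenever the direct link is not vanishingly weak relative to the \ac{RIS} link) so that the lower endpoint $\alpha\ge 0$ stays inactive, and that $P_{\text{N,Tx}}/P \le 1$ so that the \ac{EMFE} cap is the only relevant ceiling rather than $\alpha\le 1$. These sign and range conditions are what make the clean two-term $\min$ exact; outside them the maximizer would additionally have to be clipped to $[0,1]$. I would therefore state them explicitly as the operating regime of the lemma and verify them at the end.
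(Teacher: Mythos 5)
Your proof is correct and follows essentially the same route as the paper's Appendix A: substitute $P_\text{D}=\alpha P$, $P_\text{R}=(1-\alpha)P$ into the two-log sum, find the stationary point of the resulting single-variable function (which collapses to the same linear equation $c_2-c_1+Pc_1c_2=2Pc_1c_2\alpha$), and clip at $P_{\text{N,Tx}}/P$ to enforce the \ac{EMFE} constraint. You are in fact more careful than the paper, which omits the concavity check and the explicit feasibility/boundary discussion (and whose displayed ``derivative'' is really the quadratic inside a single combined logarithm); your verification that $\alpha^\star\ge 0$ and the monotone-then-clip argument are welcome additions rather than deviations.
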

\begin{proof}
The proof of Lemma \ref{lemma1} can be found in Appendix A.
\end{proof}

Lemma \ref{lemma1} is useful for evaluating the upper bounded performance in  \ac{RIS}-assisted networks with \ac{EMFE} constraint. With the knowledge of optimized beam pattern leading to average received \ac{SNR}, the optimal power allocation for the direct link and the \ac{RIS}-assisted link can be analytically derived.

\begin{lemma}\label{lemma2}
To obtain the best \ac{IU} rate (\ref{eq_R}), the considered \ac{EMFE}-constrained \ac{RIS} system should always fill in the direct link with the maximum possible power if the direct link is better in the sense that $c_2 > c_1$.
\end{lemma}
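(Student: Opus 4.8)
The plan is to exploit the fact that the objective $R$ in (\ref{eq_R_defi}) places the contributions of both links inside a \emph{single} logarithm, so that maximizing $R$ reduces to maximizing the combined received \ac{SNR}. With the beamformers $\bm{w}_\text{D},\bm{w}_\text{R},\bm{\Theta}$ already fixed by the beam-optimization stage (Algorithms \ref{alg_1}--\ref{alg_2}), only the power split remains free; I would parametrize it as $P_\text{D}=\alpha P$ and $P_\text{R}=(1-\alpha)P$, which automatically enforces $P_\text{D}+P_\text{R}=P$ and turns (\ref{eq_R}) into a one-dimensional problem in $\alpha$. Since $B\log_2(1+\cdot)$ is strictly increasing in its argument, an equivalent objective is the combined \ac{SNR} itself.

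Next I would pass to the averaged link gains encoded by $c_1$ and $c_2$ in Lemma \ref{lemma1}. The averaged combined \ac{SNR} then reads $\alpha P c_2+(1-\alpha)P c_1=P c_1+\alpha P(c_2-c_1)$, which is \emph{affine} in $\alpha$ with slope $P(c_2-c_1)$. This exposes the crucial structural contrast with Method 5: there the two links sit in separate logarithms, producing the strictly concave sum whose interior water-filling optimum is given by Lemma \ref{lemma1}, whereas here the single combined logarithm yields a linear objective in $\alpha$ whose optimum is necessarily attained at a boundary of the feasible set.

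I would then determine that feasible set. The box constraints $0\le P_\text{D},P_\text{R}\le P$ give $\alpha\in[0,1]$, while the \ac{EMFE} constraint $P_{\text{N, Rx}}\le\bar{P}$, which in Case 1 limits only the exposing direct-link power to $P_\text{D}\le P_{\text{N, Tx}}$ (cf.\ (\ref{eq_power_niu}) and Method 3), further caps $\alpha\le P_{\text{N, Tx}}/P$. Hence $\alpha$ ranges over $[0,\min\{1,P_{\text{N, Tx}}/P\}]$. When $c_2>c_1$ the slope $P(c_2-c_1)$ is strictly positive, so the affine objective increases on this interval and is maximized at its right endpoint, $\alpha^\star=\min\{1,P_{\text{N, Tx}}/P\}$, i.e.\ $P_\text{D}=\min\{P,P_{\text{N, Tx}}\}$. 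This is exactly the assertion that the direct link is filled with the maximum power allowed by the exposure budget, and it recovers Method 3 as the EMFE-binding special case $P_{\text{N, Tx}}<P$.

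The main obstacle I anticipate is not the optimization, which is elementary once the affine structure is exposed, but justifying the reduction from the instantaneous rate to the averaged-\ac{SNR} criterion built on $c_1,c_2$: one must argue this averaging is consistent with the objective (for example by reading $R$ as the rate evaluated at the averaged combined \ac{SNR} $P c_1+\alpha P(c_2-c_1)$, which is exact under deterministic/\ac{LoS} gains and an upper bound by Jensen's inequality otherwise), and one must confirm that in Case 1 only the direct link loads the exposure constraint, so that $P_\text{R}$ is genuinely free to absorb the overflow $P-P_{\text{N, Tx}}$. Once these modeling points are settled, the sign of $c_2-c_1$ alone selects the corner and the proof is complete.
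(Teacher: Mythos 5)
Your proposal is correct and follows essentially the same route as the paper's Appendix B: write the rate as a single logarithm of an expression affine in the power-split factor $\alpha$, observe that its slope is proportional to $c_2-c_1$ (the paper does this by computing $\mathrm{d}R/\mathrm{d}\alpha$ and noting it is positive; you do it by noting the affine argument is increasing and the logarithm is monotone), and conclude that the optimum sits at the largest feasible $\alpha$, i.e.\ the direct link is filled up to the exposure cap. Your additional remarks on the exact feasible interval $[0,\min\{1,P_{\text{N, Tx}}/P\}]$ and on the averaging behind $c_1,c_2$ are sensible refinements but do not change the argument.
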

\begin{proof}
The proof of Lemma \ref{lemma2} can be found in Appendix B.
\end{proof}

\section{Simulation Results} 

In this section, we present numerical results for the proposed \ac{EMFE}-aware power allocation methods for \ac{RIS}-assisted networks. The simulation setup is presented in Fig. \ref{fig_3}. Consider one \ac{BS} with $N_{\text{T}} = 32$ antennas, one \ac{RIS} with $N = 100$ elements, one \ac{IU} and one \ac{NIU} both having one antenna. The carrier frequency $f_\text{c}$ is set to 28 GHz with 100 MHz channel bandwidth $B$. The noise power is set as -174 dBm/Hz with 10 dB noise figure. The antenna gain of the \ac{BS}, \ac{RIS}, and the users are set to 18 dBi, 18 dBi, and 0 dBi, respectively. The total transmit power is $P = $ 43 dBm in Fig. \ref{fig_r1}. As shown in Fig. \ref{fig_3}, the \ac{BS} is located at [-80m, 0] while the \ac{IU} is fixed at [80m, 0] for Case 1 and [-70m, 0] for Case 2. Except for otherwise stated, the \ac{RIS} is located at [0, 50m] for Case 1 and [-70m, 10m] for Case 2. Finally, for the \ac{NIU}, it moves along the x-axis for the Case 1 (Fig. \ref{fig_1}), while it stands close or far from the \ac{IU} (Case 2, Fig. \ref{fig_2}) with the same distance to the \ac{BS}. For simplification, we omit the \ac{EMFE} from the \ac{RIS}-\ac{NIU} link since the \ac{RIS} beam is supposed to be optimized towards \ac{IU}. Nevertheless, we verify this assumption in one of the curves in Fig. \ref{fig_r1}.

\begin{figure}
\centering
  \includegraphics[width=1.0\columnwidth]{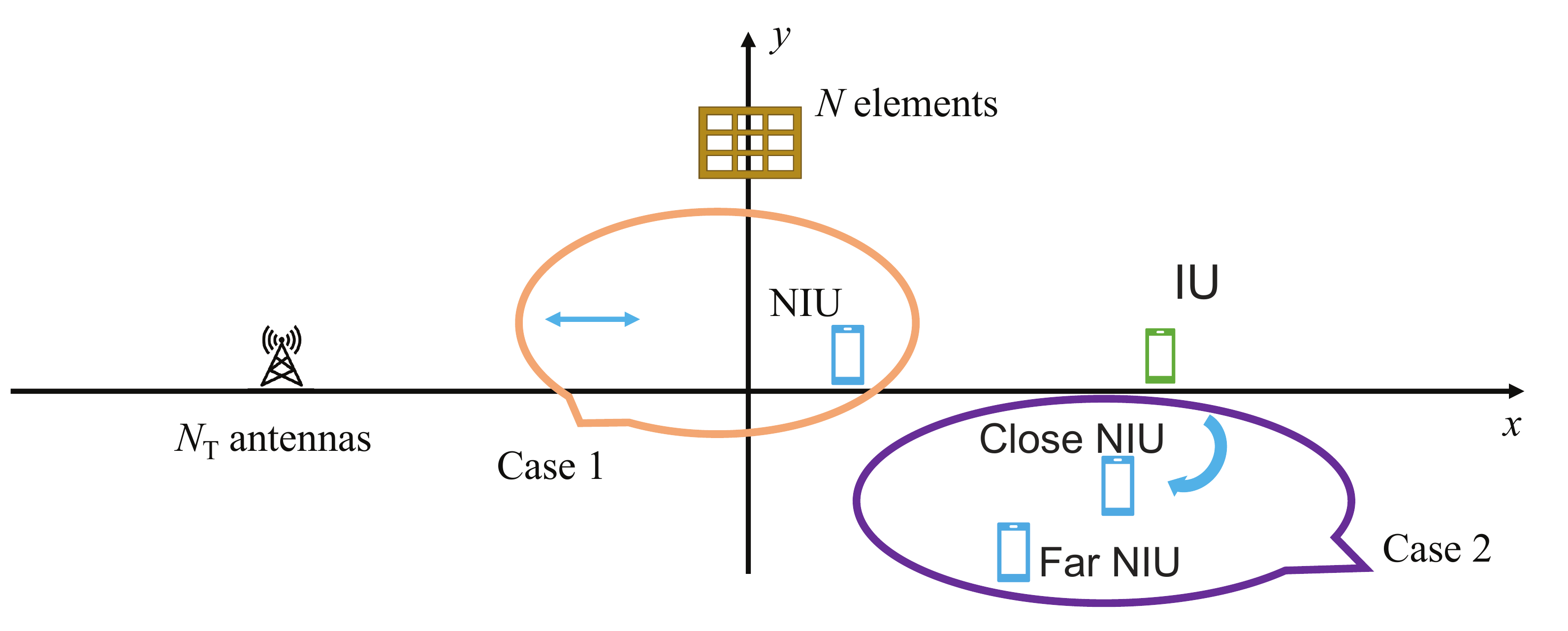}\\
\caption{The simulation setup for the two considered cases. }
\label{fig_3}
\end{figure}

Our proposed methods are generic to different channel models. We assume that we have all \ac{LoS} paths and the path loss at distance $d$ can be obtained by the mmMAGIC model \cite[Table IV]{theodore2017tap}
\begin{align}
    \text{PL} = 19.2\log_{10}(d) + 32.9 + 20.8\log_{10}(f_\text{c})
\end{align}
with 2 dB shadowing. Also, to fully evaluate the performance, for Case 1 (Fig. \ref{fig_1}) we consider Rayleigh fading with a unit variance while Case 2  (Fig. \ref{fig_2}) uses a more generic \ac{mmWave} channel with \ac{AoD} and multipath:
\begin{align}
    \bm{h} = \sqrt{\frac{\text{PL}}{L}}\sum_{l=1}^{L}\beta_l\bm{a}(\psi_l).
\end{align}
Here, $\beta\sim\mathcal{CN}(0,1)$ and  $L$ is the number of paths (set to 3 in the simulation). Furthermore, $\bm{a}(\psi_l) = \left\{e^{jkd_{a}\left(n_{\text{T}}-1\right)\sin\left(\psi_l\right)}\right\}_{n_{\text{T}} = 1}^{N_{\text{T}}}$ is the antenna steering vector with $k = 2\pi/\lambda$ and $d_{\text{a}} = \lambda/2$ when $\lambda$ is the wavelength. Moreover, the codebook-based beamforming proposed in Algorithm \ref{alg_2} can be applied with different codebooks. Here, we present results with \cite{Yu2018tsp}
\begin{align}
    \bm{V} = \left\{\bm{\mu}_i\right\}_{i = 1}^{\sqrt{N}} \otimes \left\{\bm{\nu}_j\right\}_{j = 1}^{\sqrt{N}},
\end{align}
where
\begin{align}
    \bm{\mu}_i = \left[1, e^{\frac{2\pi(i-1)}{\sqrt{N}}}, ...,  e^{\frac{2\pi(i-1)(\sqrt{N}-1)}{\sqrt{N}}}\right],
\end{align}
\begin{align}
    \bm{\nu}_j = \left[1, e^{\frac{2\pi(j-1)}{\sqrt{N}}}, ...,  e^{\frac{2\pi(j-1)(\sqrt{N}-1)}{\sqrt{N}}}\right],
\end{align}
and $\otimes$ represents Kronecker product.

\begin{figure}
\centering
  \includegraphics[width=1\columnwidth]{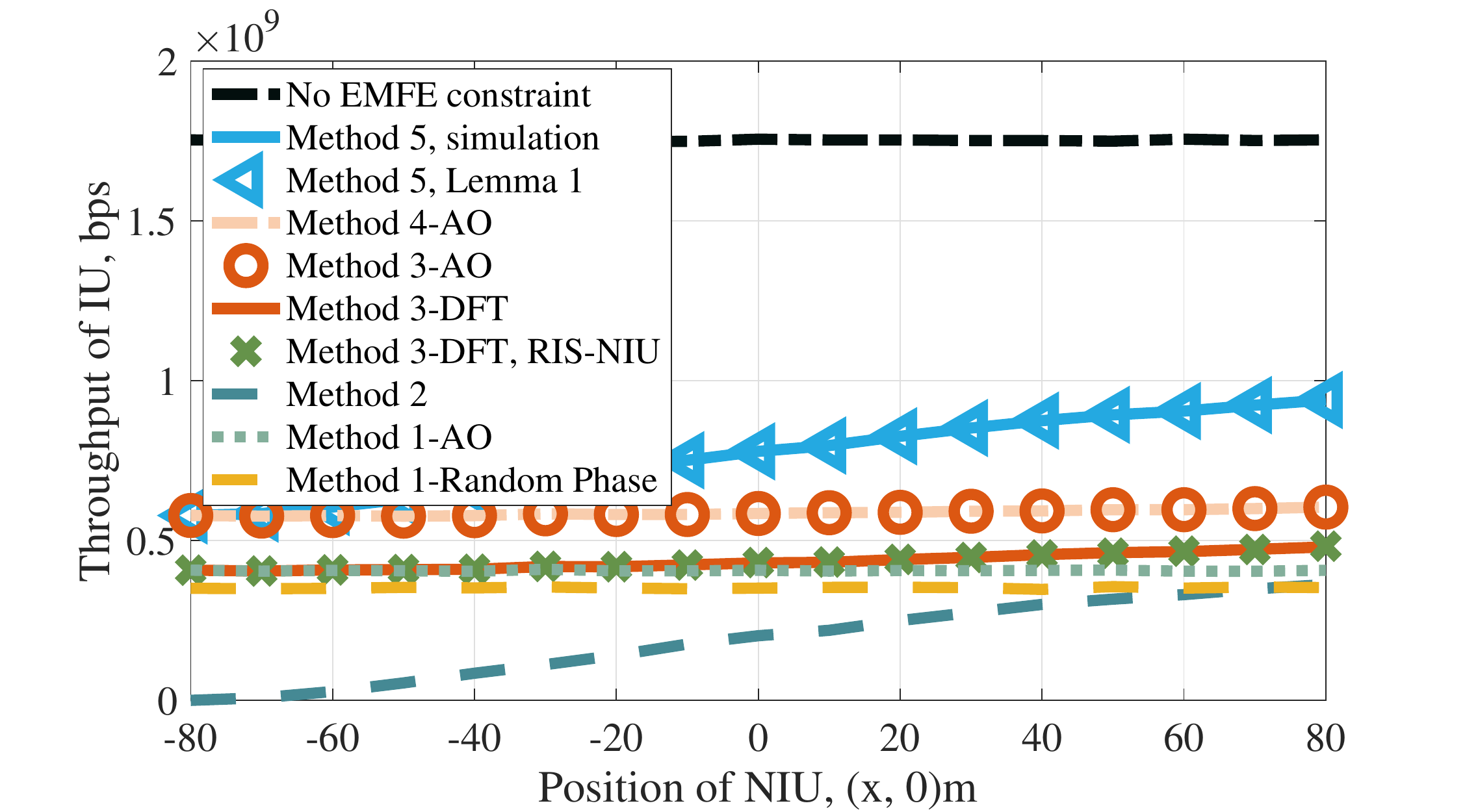}\\
\caption{\ac{IU} throughput as a function of the position of the \ac{NIU} for Case 1. The \ac{EMFE} constraint  $\bar{P}$ is set to 0.005 mW.}
\label{fig_r1}
\vspace{-5mm}
\end{figure}

Figure \ref{fig_r1} presents the \ac{IU} throughput as a function of the position of \ac{NIU} for Case 1, with the \ac{EMFE} constraint  $\bar{P}$ is set to 0.005 mW. Here, we consider the cases with Method 1 (\ac{RIS} only), Method 2 (direct link only), Method 3 with both \ac{AO} and \ac{DFT}-based beam optimization (corresponding to Algorithm \ref{alg_1} and \ref{alg_2}, respectively), Method 4 with exhaustive power allocation using \ac{AO}, Method 5 as an upper bound from both simulation and Lemma \ref{lemma1}, and the case without \ac{EMFE} constraint and transmit with the \ac{BS}-\ac{IU} direct link. The interference from \ac{RIS} to \ac{NIU} is considered in the case "Method 3-\ac{DFT}, \ac{RIS}-\ac{NIU}". As an additional benchmark, the performance with random phase \ac{RIS} is also presented. Then, in Fig. \ref{fig_r2}, with a more relaxed \ac{EMFE} constraint ($\bar{P} =$ 0.5 mW), we focus on the comparison of Methods 1-4 as well as  \ac{AO} and \ac{DFT}-based beam optimization.

To study the performance of Case 2 where the \ac{NIU} is alongside the \ac{IU}, in Fig. \ref{fig_r3}, we plot the \ac{IU} throughput versus the transmit power. Here, the \ac{EMFE} constraint  $\bar{P}$ set to 0.1 mW and the \ac{AoD} for the close and far \ac{NIU} are set to $\pi/16$ and $\pi/8$, respectively. We consider the cases with Method 1 (\ac{RIS} only), Method 2 (direct link only), and Method 3 with \ac{DFT}-based optimization (Algorithm \ref{alg_2}). Finally, to evaluate the impact of the deployment of the \ac{RIS}, both close and far \ac{RIS} with respect to the \ac{IU} are considered, and they have coordinates as (-70, 10)m, and (-30, 10)m, respectively.

According to these simulation results, the following conclusions can be drawn:
\begin{itemize}
    \item The \ac{EMFE} constraint from, e.g., regulation and standardization, could drastically affect the system performance, according to the no constraint case in Figs. \ref{fig_r1}-\ref{fig_r2} and saturation of Method 2 in Fig. \ref{fig_r3}, and thus needs to be carefully taking into account with network design.
    \item For the considered two cases, i.e.,  the \ac{NIU} between the \ac{BS} and the \ac{IU} (Case 1) and  alongside the \ac{IU} (Case 2), the system performance is affected by various parameters such as the strength of the \ac{EMFE} constraint, the position of \ac{RIS}, the total transmit power, as well as the relative distance between the \ac{NIU} and \ac{IU}.
    \item With the \ac{EMFE} constraint being considered, it is still beneficial to fully exploit the potential of the direct \ac{BS}-\ac{IU} path (which normally has better propagation conditions in terms of, e.g., path loss compared to the \ac{RIS} path), with the help of \ac{NIU}. As shown in a power-limited system in Figs. \ref{fig_r1}-\ref{fig_r2}, the performance of only using the direct path increases with smaller \ac{EMFE} constraints, i.e., further \ac{NIU}. On the other hand, with an increased transmit power budget in Fig. \ref{fig_r3}, \ac{RIS}-assisted link could eventually overperform even with a far-deployed \ac{RIS}.
    \item If the direct path has better propagation condition, it is preferable to fill the link first before considering the \ac{RIS}-assisted link (Lemma \ref{lemma2}), which is supported by the similar performance of Methods 3-4 in Figs. \ref{fig_r1}-\ref{fig_r2}.
    \item The analytical results of the  upper bound (\ref{eq_R_inr}) in Lemma \ref{eq_lemma1} agrees well with the simulations, and it reveals the system performance potential with transmitting different codewords for the direct and the \ac{RIS}-assisted link.
    \item As can be seen in Figs. \ref{fig_r1}-\ref{fig_r2}, the proposed \ac{DFT}-based scheme (Algorithm. \ref{alg_2}) could reach a close performance compared to the \ac{AO} method, especially with a more relaxed \ac{EMFE} constraint, and it does not require explicit \ac{CSI} when optimizing the beams. Also, random phase \ac{RIS} would result in a performance loss compared to \ac{AO}.
    \end{itemize}

\begin{figure}
\centering
  \includegraphics[width=1\columnwidth]{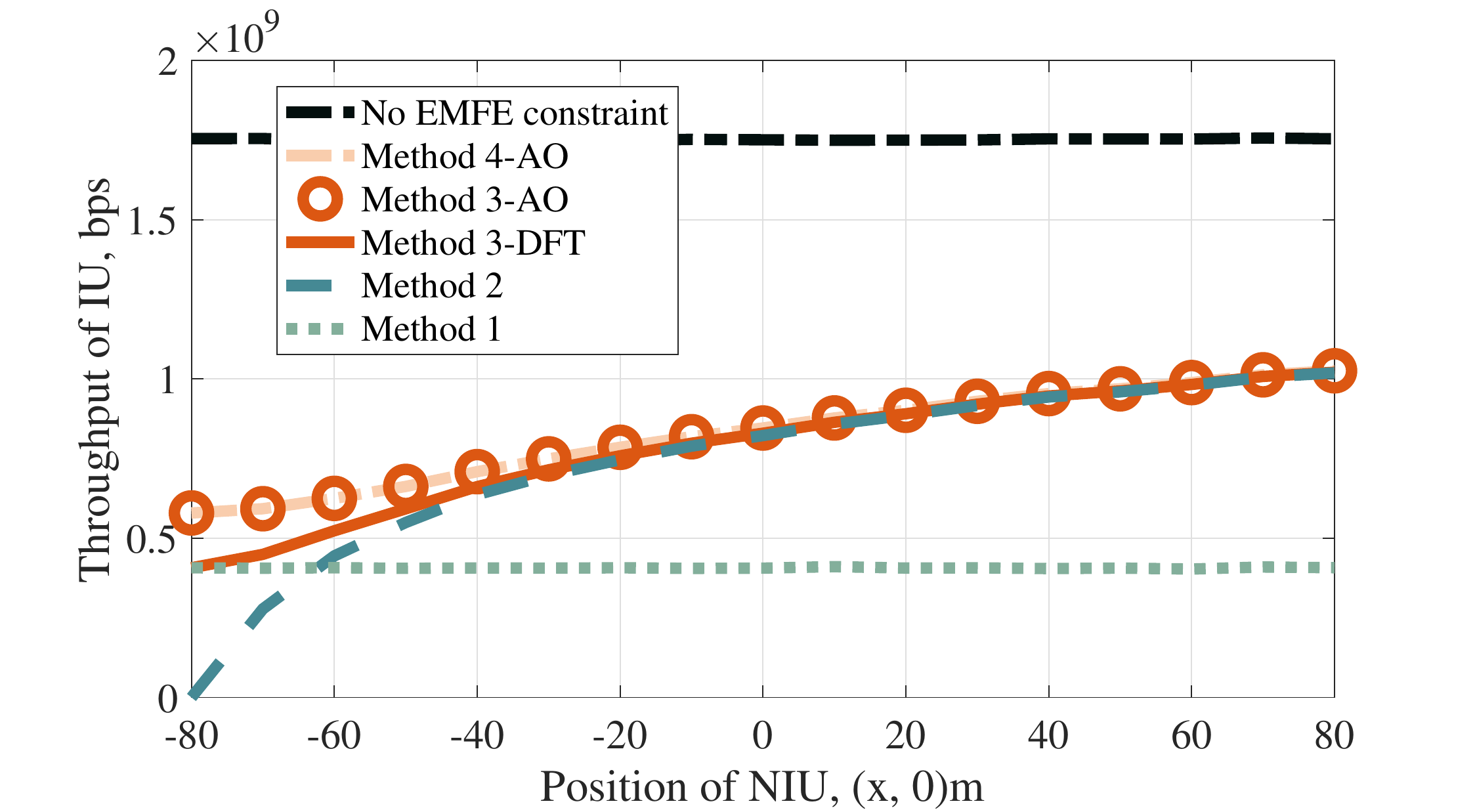}\\
\caption{\ac{IU} throughput as a function of the position of the \ac{NIU} for Case 1. The \ac{EMFE} constraint  $\bar{P}$ is set to 0.5 mW. }
\label{fig_r2}
\vspace{-3mm}
\end{figure}

\begin{figure}
\centering
  \includegraphics[width=1\columnwidth]{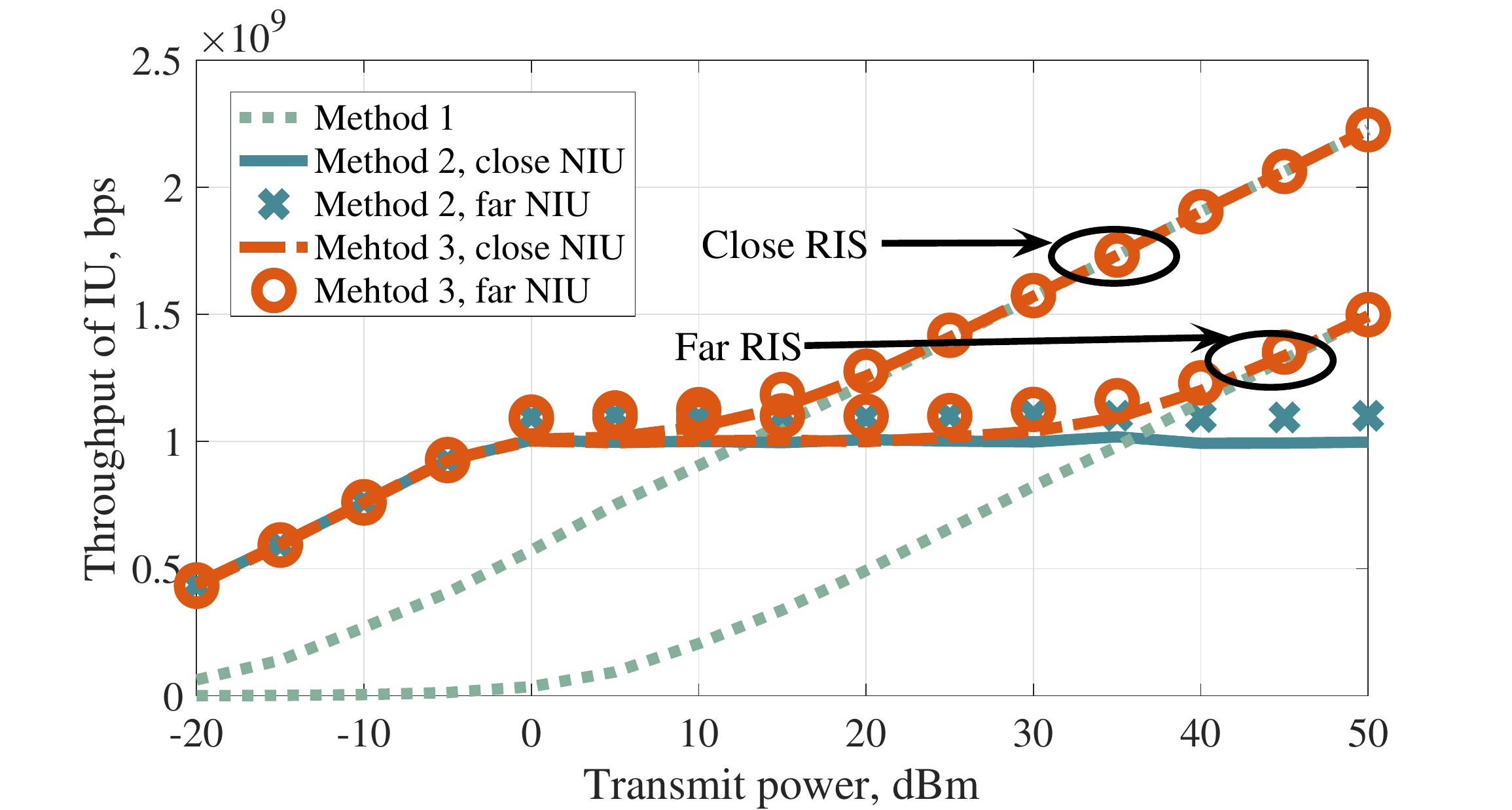}\\
\caption{\ac{IU} throughput as a function of transmit power $P$ for Case 2. The \ac{EMFE} constraint  $\bar{P}$ is set to 0.1 mW and the \ac{AoD} for the close and far \ac{NIU} are set to $\pi/16$ and $\pi/8$, respectively. The coordinates for the close and far \ac{RIS} are (-70, 10)m, and (-30, 10)m, respectively. }
\label{fig_r3}
\vspace{-6mm}
\end{figure}

\section{Conclusions} 
In this paper, we studied the \ac{EMFE}-limited \ac{RIS}-assisted system with two new types of constrained positions. Specifically, we reveal that with the help of \ac{NIUE} detection, it is beneficial in terms of the system throughput to fully utilize the better propagation path, which is normally the direct \ac{BS}-\ac{IU} link. Also, we evaluated the system performance with the proposed \ac{AO}- and \ac{DFT}-based scheme with various power allocation methods. The \ac{DFT}-based method is shown to be efficient with limited channel information. In addition, the analytical contributions on the upper bound of the system as well as the preference of the direction in power allocation provide solid insights on the considered setups.

\section*{Acknowledgement}
This work  was supported by the EU H2020 RISE-6G project
under grant number 101017011.

\section*{Appendix A}\label{sec_appendixA}
With power allocation factor $\alpha$, the throughput of \ac{IU} (\ref{eq_R_inr}) can be written as
\begin{align}
    R_{\text{upper}} = B\left(\log_2\left(1+(1-\alpha)P c_1\right) + \log_2\left(1+\alpha P c_2\right)\right).
\end{align}
Then, the derivative of $ R_{\text{upper}}$ with respect to $\alpha$ is
\begin{align}\label{eq_deri}
  &  \frac{\text{d}R_{\text{upper}}}{\text{d}\alpha} =  B\log_2\left(\left(1+Pc_1-\alpha Pc_2\right)\left(1+\alpha Pc_2\right)\right)\nonumber\\&=  B\log_2\left( 1+Pc_1 + \alpha\left(Pc_2 +P^2c_1c_2 -Pc1 \right)- \alpha^2P^2c_1c_2\right).
\end{align}
Setting (\ref{eq_deri}) equal to zero the optimal power allocation for the upper bounded \ac{IU} rate (\ref{eq_R_inr}) can be obtained as \ref{eq_lemma1}.

\section*{Appendix B}\label{sec_appendixB}
(\ref{eq_R}) with power allocation factor $\alpha$ can be simplified as
\begin{align}
    R = B\log_2\left(1+\left(1-\alpha\right)Pc_1+\alpha Pc_2\right).
\end{align}
Its derivative with respect to $\alpha$ is
\begin{align}
     \frac{\text{d}R}{\text{d}\alpha} = \frac{BP\cdot\left(c_2-c_1\right)}{\left(Pc_2-Pc_1\right)x+Pc_1+1}.
\end{align}
When $c_2 > c_1$, i.e., the direct link is better, the derivative is always positive. As a result, the maximum allowed power ($P_{\text{N, Tx}}$) should be allocated to the direct path.

\bibliographystyle{IEEEtran}
\bibliography{main.bib}

\begin{thebibliography}{10}
\providecommand{\url}[1]{#1}
\csname url@samestyle\endcsname
\providecommand{\newblock}{\relax}
\providecommand{\bibinfo}[2]{#2}
\providecommand{\BIBentrySTDinterwordspacing}{\spaceskip=0pt\relax}
\providecommand{\BIBentryALTinterwordstretchfactor}{4}
\providecommand{\BIBentryALTinterwordspacing}{\spaceskip=\fontdimen2\font plus
\BIBentryALTinterwordstretchfactor\fontdimen3\font minus
  \fontdimen4\font\relax}
\providecommand{\BIBforeignlanguage}[2]{{%
\expandafter\ifx\csname l@#1\endcsname\relax
\typeout{** WARNING: IEEEtran.bst: No hyphenation pattern has been}%
\typeout{** loaded for the language `#1'. Using the pattern for}%
\typeout{** the default language instead.}%
\else
\language=\csname l@#1\endcsname
\fi
#2}}
\providecommand{\BIBdecl}{\relax}
\BIBdecl

\bibitem{vook2014ims}
F.~W. Vook, A.~Ghosh, and T.~A. Thomas, ``{MIMO and beamforming solutions for
  5G technology},'' in \emph{Proc. IEEE MTT-S International Microwave
  Symposium}, Tampa, FL, USA, Jun. 2014, pp. 1--4.

\bibitem{luca2021health}
L.~Chiaraviglio, A.~Elzanaty, and M.-S. Alouini, ``{Health risks associated
  with 5G exposure: A view from the communications engineering perspective},''
  \emph{IEEE Open J. Commun. Soc.}, vol.~2, pp. 2131--2179, Aug. 2021.

\bibitem{icnirp2020limit}
{International Commission on Non-Ionizing Radiation Protection (ICNIRP)},
  ``{For limiting exposure to electromagnetic fields (100 KHz to 300 GHz) },''
  \emph{Health Phys}, vol. 118, no.~5, pp. 483--524, Jun. 2020, [Online].
  Available at:
  \url{https://www.icnirp.org/cms/upload/publications/ICNIRPrfgdl2020.pdf}.

\bibitem{gsma2014arbitraty}
{Global System for Mobile Communications Association (GSMA)}, ``{Arbitrary
  Radio Frequency exposure limits: Impact on 4G network deployment. Case
  Studies: Brussels, Italy, Li thuania, Paris and Poland},'' 2014, [Online].
  Avaliable at:
  \url{https://www.gsma.com/publicpolicy/wp-content/uploads/2014/03/Arbitrary-Radio-Frequencyexposure-limits_Impact-on-4G-networks-deployment_WEB.pdf}.

\bibitem{emilio2021reconfigurable}
E.~C. Strinati \emph{et~al.}, ``{Reconfigurable, intelligent, and sustainable
  wireless environments for 6G smart connectivity},'' \emph{IEEE Commun. Mag.},
  vol.~59, no.~10, pp. 99--105, Oct. 2021.

\bibitem{george2022smart}
G.~C. Alexandropoulos \emph{et~al.}, ``{Smart wireless environments enabled by
  RISs: Deployment scenarios and two key challenges},'' in \emph{2022 Joint
  European Conference on Networks and Communications \& 6G Summit (EuCNC/6G
  Summit)}, Grenoble, France, Jun. 2022, pp. 1--6.

\bibitem{renzo2019smart}
M.~D. Renzo \emph{et~al.}, ``{Smart radio environments empowered by
  reconfigurable AI meta-surfaces: An idea whose time has come},''
  \emph{Eurasip J. Wirel. Commun. Netw.}, vol. 2019, no.~1, pp. 1--20, May
  2019.

\bibitem{basar2019wireless}
E.~Basar, M.~Di~Renzo, J.~De~Rosny, M.~Debbah, M.-S. Alouini, and R.~Zhang,
  ``{Wireless communications through reconfigurable intelligent surfaces},''
  \emph{IEEE Access}, vol.~7, pp. 116\,753--116\,773, Aug. 2019.

\bibitem{renzo2020reconfigurable}
M.~Di~Renzo \emph{et~al.}, ``{Reconfigurable intelligent surfaces vs. relaying:
  Differences, similarities, and performance comparison},'' \emph{IEEE Open J.
  Commun. Soc.}, vol.~1, pp. 798--807, Jun. 2020.

\bibitem{hussam2022emf}
H.~Ibraiwish, A.~Elzanaty, Y.~H. Al-Badarneh, and M.-S. Alouini, ``{EMF-aware
  cellular networks in RIS-assisted environments},'' \emph{IEEE Commun. Lett.},
  vol.~26, no.~1, pp. 123--127, Jan. 2022.

\bibitem{zappone2022spl}
A.~Zappone and M.~D. Renzo, ``{Energy efficiency optimization of reconfigurable
  intelligent surfaces with electromagnetic field exposure constraints},''
  \emph{IEEE Signal Process. Lett.}, vol.~29, pp. 1447--1451, Jun. 2022.

\bibitem{dt2022spawc}
D.-T. Phan-Huy, Y.~Bénédic, S.~H. Gonzalez, and P.~Ratajczak, ``{Creating and
  operating areas with reduced electromagnetic field exposure thanks to
  reconfigurable intelligent surfaces},'' in \emph{Proc. IEEE SPAWC}, Oulu,
  Finland, Jul. 2022, pp. 1--5, [Online]. Available at:
  \url{https://arxiv.org/abs/2206.06880}.

\bibitem{awarkeh2021spawc}
N.~Awarkeh, D.-T. Phan-Huy, and R.~Visoz, ``Electro-magnetic field ({EMF})
  aware beamforming assisted by reconfigurable intelligent surfaces,'' in
  \emph{Proc. IEEE SPAWC}, Lucca, Italy, Sept. 2021, pp. 541--545.

\bibitem{nour2022eucnc}
N.~Awarkeh, D.-T. Phan-Huy, R.~Visoz, and M.~D. Renzo, ``{A novel RIS-aided
  EMF-aware beamforming using directional spreading, truncation and
  boosting},'' in \emph{2022 Joint European Conference on Networks and
  Communications \& 6G Summit (EuCNC/6G Summit)}, Grenoble, France, Jun. 2022,
  pp. 7--12.

\bibitem{nour2022eucncusing}
N.~Awarkeh, D.-T. Phan-Huy, and M.~D. Renzo, ``{A novel RIS-aided EMF exposure
  aware approach using an angularly equalized virtual propagation channel},''
  in \emph{2022 Joint European Conference on Networks and Communications \& 6G
  Summit (EuCNC/6G Summit)}, Grenoble, France, Jun. 2022, pp. 500--505.

\bibitem{yu2022swirnet}
Y.~Yu, R.~Ibrahim, and D.-T. Phan-Huy, ``{Dual gradient descent EMF-aware
  MU-MIMO beamforming in RIS-aided 6G networks},'' in \emph{2022 Workshop on
  Sustainable Wireless Networking (SWirNet), accepted}, Turino, Italy, Sept.
  2022.

\bibitem{luca2022twc}
L.~Chiaraviglio, S.~Turco, G.~Bianchi, and N.~Blefari-Melazzi, ``Cellular
  network densification increases radio-frequency pollution: True or false?''
  \emph{IEEE Trans. Wireless Commun.}, vol.~21, no.~4, pp. 2608--2622, Apr.
  2022.

\bibitem{wu2019twc}
Q.~Wu and R.~Zhang, ``Intelligent reflecting surface enhanced wireless network
  via joint active and passive beamforming,'' \emph{IEEE Trans. Wireless
  Commun.}, vol.~18, no.~11, pp. 5394--5409, Nov. 2019.

\bibitem{guo2020twc}
H.~Guo, Y.-C. Liang, J.~Chen, and E.~G. Larsson, ``Weighted sum-rate
  maximization for reconfigurable intelligent surface aided wireless
  networks,'' \emph{IEEE Trans. Wireless Commun.}, vol.~19, no.~5, pp.
  3064--3076, May 2020.

\bibitem{zappone2021twc}
A.~Zappone, M.~Di~Renzo, F.~Shams, X.~Qian, and M.~Debbah, ``Overhead-aware
  design of reconfigurable intelligent surfaces in smart radio environments,''
  \emph{IEEE Trans. on Wireless Commun.}, vol.~20, no.~1, pp. 126--141, Jan.
  2021.

\bibitem{huang2021globecom}
H.~Huang, X.~Wang, C.~Zhang, K.~Qiu, and Z.~Han, ``Reward-maximization-based
  passive beamforming for multi-{RIS}-aided multi-user {MISO} systems,'' in
  \emph{Proc. IEEE GLOBECOM}, Madrid, Spain, Dec. 2021, pp. 1--6.

\bibitem{theodore2017tap}
T.~S. Rappaport, Y.~Xing, G.~R. MacCartney, A.~F. Molisch, E.~Mellios, and
  J.~Zhang, ``{Overview of millimeter wave communications for fifth-generation
  (5G) wireless networks—with a focus on propagation models},'' \emph{IEEE
  Trans. Antennas Propag.}, vol.~65, no.~12, pp. 6213--6230, Dec. 2017.

\bibitem{Yu2018tsp}
Y.~Han, S.~Jin, J.~Zhang, J.~Zhang, and K.-K. Wong, ``{DFT-based hybrid
  beamforming multiuser systems: Rate analysis and beam selection},''
  \emph{IEEE J. Sel. Top. Signal Process.}, vol.~12, no.~3, pp. 514--528, Jun.
  2018.

\end{thebibliography}

\end{document}